\title{Risk-Bounded Control with Kalman Filtering \\ and Stochastic Barrier Functions}
\newtheorem{Lemma}{Lemma}[section]
\newtheorem{Assumption}{Assumption}[section]
\newtheorem{Problem}{Problem}
\newtheorem{theorem}{Theorem}
\newtheorem{remark}{Remark}
\newtheorem{proposition}{Proposition}
\newtheorem{definition}{Definition}
\newcommand{\xht}{\hat{x}(t)}
\newcommand{\ai}{a}
\newcommand{\bb}{b}
\newcommand{\uc}{\mathsf{u}}
\newcommand{\g}{\mathsf{g}}
\newcommand{\bunderline}[1]{\underline{#1\mkern-3mu}\mkern3mu }
\newcommand{\Zeta}{\mathcal{Z}}
\newcommand{\A}{\mathcal{A}}
\newcommand{\Y}{\mathcal{Y}}
\newcommand{\stcomp}[1]{\widetilde{#1}}
\newcommand{\bardh}[1]{}}
\author{Shakiba Yaghoubi, Georgios Fainekos, Tomoya Yamaguchi, Danil Prokhorov, Bardh Hoxha
\thanks{S. Yaghoubi, and G. Fainekos 
are with SCAI, Arizona State University, Tempe, AZ, USA.
Email: {\tt \footnotesize <first\_name.last\_name>@asu.edu}}%
\thanks{T. Yamaguchi, D. Prokhorov, and B. Hoxha 
are with the Toyota Research Institute of North America, Ann Arbor, MI, USA.
Email: {\tt \footnotesize <first\_name.last\_name>@toyota.com}}%
\thanks{This research was partially funded by NSF awards OIA 1936997 and CNS 1932068, and DARPA AMP N6600120C4020.}}
\date{February 2021}
\begin{document}

\maketitle
\begin{abstract}
In this paper, we study Stochastic Control Barrier Functions (SCBFs) to enable the design of probabilistic safe real-time controllers in presence of uncertainties and based on noisy measurements. Our goal is to design controllers that bound the probability of a system failure in finite-time to a given desired value. To that end, we first estimate the system states from the noisy measurements using an Extended Kalman filter, and compute confidence intervals on the filtering errors. Then, we account for filtering errors and derive sufficient conditions on the control input based on the estimated states to bound the probability that the real states of the system enter an unsafe region within a finite time interval. We show that these sufficient conditions are linear constraints on the control input, and, hence, they can be used in tractable optimization problems to achieve safety, in addition to other properties like reachability, and stability. Our approach is evaluated using a simulation of a lane-changing scenario on a highway with dense traffic. 
\end{abstract}
\begin{keywords}%
	Barrier Function, Uncertainty, Kalman Filter, Robotics 
\end{keywords}

\section{Introduction}
As autonomous mobile systems (AMS) are gaining traction in many fields, safety remains a primary concern.
In areas such as personal mobility, an AMS must complete start-to-goal motion tasks while ensuring that no collision with other dynamic or static agents is made.
This is a challenging problem that becomes even more difficult when considering noisy or incomplete sensor information.
Take as an example the problem of controlling an autonomous vehicle in a dense highway crossing from one lane to another.
The sensor information may be noisy and incomplete and therefore various state estimation techniques such as Extended Kalman Filters (EKF) are needed to be utilized.
Furthermore, since this is a safety-critical system, the control to the system should be computed in real time to satisfy desired risk bounds.

In this work, we present a control synthesis method to solve the motion planning problem in uncertain environments and in the presence of partial and noisy measurements while bounding the probability of an upcoming collision to a specified value.
We model the system using Stochastic Differential Equations (SDE) with partial noisy measurements~\cite{oksendal2003stochastic}. 
The true states of the system are estimated using an EKF. We show that if the system has some desired properties \cite{reif2000stochastic}, the estimation will be accurate enough, and the estimation error will be bounded. 
We use the estimation error bounds to compute a safety margin around the unsafe set of system states such that outside this safety margin, expected errors in the estimation would not result in an erroneous classification of an unsafe behavior as a safe one.
Then, we use a Barrier Function (BF) candidate whose level set of value one contains the unsafe set of system states expanded by the safety margin. 
Conditions on this BF candidate that bound its expected value over a finite-time horizon are derived based on the model of the SDE and the EKF. These conditions involve the control inputs, and parameters that control the evolution of the BF expected value.
As the level set of value one of the BF includes the unsafe states and the safety margin, these conditions on the BF provide an upper bound to the value of the risk \cite{kushner1967stochastic,santoyo2019barrier,yaghoubi2020risk}. 
Hence, by constraining the aforementioned parameters and the control inputs, we can bound the risk to a desired threshold.
We combine these constraints with other performance related objectives in a Quadratic Program (QP) that can be solved in real-time to compute control inputs that meet the performance objectives while conforming to the desired risk bounds. As we will explain in the rest of the paper, similar to other Barrier Function methods, the aforementioned QP may become infeasible at some states when constraints on an admissible control input exist, or when multiple safety constraints related to multiple unsafe regions exist in the program.
In spite of this, in many applications such as autonomous driving, generating the best possible solution is necessary even if it does not agree with the constraints. Therefore, we provide a substitute to the previous formulation of the QP that returns the same solution when risk bounds can be met,
otherwise, it provides the least risky action with respect to the given circumstances. 

The motion planning problem with safety guarantees has received significant attention in the past.
Researchers have used methods inspired by reachability analysis in order to generate provably safe trajectories \cite{althoff2008stochastic, malone2017hybrid, kochdumper2020utilizing,leung2020infusing}.
Other data driven methods have also been proposed in ~\cite{chen2018data, sadraddini2018formal}.
However, real-time computation of safe motion plans remains a challenge.
Barrier-based methods have shown promise in this direction since they can ensure forward invariance of a safe set without computation of reachable sets. 
They have been used in a variety of automotive applications such as adaptive cruise control~\cite{ames2014control}, traffic control \cite{xiao2019decentralized} and obstacle avoidance~\cite{chen2017obstacle}. 
They have also seen application in robotics~\cite{glotfelter2017nonsmooth,agrawal2017discrete,emam2019robust}.
When the process involves uncertainty with hard bounds on its magnitude, BFs can be used to verify input-to state safety \cite{kolathaya2018input} as well as safety in the worst case \cite{yaghoubi2020training}. When uncertainty has stochastic characteristics, Stochastic Barrier Functions (SBF) should be utilized to verify safety and other system properties ~\cite{Jadbabaie2007,jagtap2018temporal}.
 Conditions based on SBFs have been used in optimization problems such as QPs to compute control inputs in real-time for stochastic systems. In \cite{clark2019control}, these conditions are designed to maximize the probability of invariance of a set $C$ when measurements are partial and noisy. The derived conditions in \cite{clark2019control} may not be feasible in many applications, and in others they may result in very conservative control actions. The reason is that the conditions are designed to zero out the probability of eventually entering an unsafe set as $t\rightarrow\infty$ which is very conservative for many applications. 
As a result, in this work, we derive certificates on the control inputs based on SBFs to bound the probability of a finite-time failure to a desired value in presence of partial and noisy measurements.

The paper builds on some of the ideas in~\cite{yaghoubi2020risk, clark2019control} to solve the risk-bounded control design problem in presence of process and measurement noise.
The main contributions of the paper are as follows:
1) In Section \ref{sec:sbf_kf}, 
we compute confidence intervals associated with the estimation error of the EKF. Then by considering these bounds on the estimation error, we derive sufficient conditions on ``the control inputs and the parameters that manage the growth rate of the BF'' to bound the risk to a desired value.
2) In Section \ref{secqp}, we utilize these conditions in a QP which can be solved in real-time to bound the risk while considering other performance objectives. 3) We later provide an alternative formulation to the QP to avoid infeasibility when a violation of the risk bounds is inevitable. The solutions of the new QP formulation are identical to that of the primary program if the primary program is feasible.
4) We demonstrate our approach on a highway scenario in which an autonomous vehicle has to traverse through dense traffic with noisy and incomplete data.

\section{Problem Statement}

In this section, we formalize the bounded-risk control problem for stochastic systems with noisy and incomplete state information.

Consider a probability space ($\Omega, \mathcal{F},P)$, and the uncorrelated standard Wiener processes $w(t)$, $v(t)$ defined on this space. A stochastic system is defined using the following Stochastic Differential Equation (SDE) 
\begin{align}
    \label{eq1}
 d{x(t)} &= (f(x(t))+g(x(t)) u(t))dt + G(t)dw(t),\\\label{eq2}
 dy(t) &= Cx(t) dt + D(t) dv(t)
\end{align}
%
%
where $x(t) \in 
\mathbb{R}^{n}$ is a stochastic process, $u(t)\in U \subseteq \mathbb{R}^{l}$ is the control input, $y(t) \in \mathbb{R}^{m}$ is a measurable output, and $f:\mathbb{R}^{n} \rightarrow \mathbb{R}^{n}$, and $g:\mathbb{R}^{n} \rightarrow \mathbb{R}^{n\times l}$ are locally Lipschitz continuous functions. We also assume that functions $f,g$ and the control input $u$ satisfy appropriate conditions such that the differential equations (\ref{eq1}), and (\ref{eq2}) have unique solutions in the proper stochastic sense \cite{oksendal2003stochastic}.

Assume that an unsafe set of states for the system of Eq.~(\ref{eq1}) can be defined using a locally Lipschitz function $h:  \mathbb{R}^n\rightarrow \mathbb{R}_+$ as follows:
\begin{equation}\label{unsaf}
X_{\uc} =: \{ x \;|\;h(x)\leq 0 \}.  \end{equation}
Given the state of the system at time $t$, $x(t)$, and a planning time horizon $T$, we define $p_\uc$ as the probability that the stochastic process $x(\tau), \;t\leq \tau\leq t+T$ enters the unsafe set during this planning horizon, namely,
\begin{align}\label{pu}
    p_\uc &= Pr\{ x (\tau)\in X_\uc \;\mbox{ for some }\small{t\leq \tau\leq t+T}\} \nonumber\\&=  Pr\{ \underset{t\leq \tau \leq t+T}{\inf}  h(x (\tau))\leq 0\} .
\end{align}

Similar to \cite{yaghoubi2020risk}, here, we use the term ``risk" informally to refer to this event's probability ($p _\uc$). 

Hence the problem we need to address is formalized as follows:
\begin{Problem} \label{prb1}
 Find a control policy
 for the system (\ref{eq1}) that at any time $t\geq 0$ maps the sequence of outputs $\{y(t'): t'<t\}$ to a control input $u(t)$ that bounds the risk $p_\uc$ by a desired upper threshold $\bar{p}$, i.e., $p_\uc \leq \bar{p}$.
 \end{Problem}
 
We note that in this problem formulation, the state of the system may be partially observable and affected by noise.

\begin{remark}
In robotic applications where multiple agents need to be modelled to design control policies for a robot, it is useful to separate the model of the robot from the model of the agents as we did in \cite{yaghoubi2020risk} rather than integrating them all in an equation like (\ref{eq1}). This will allow us to consider a varying number of agents around the robot and assign a different desired upper threshhold to the risk associated with each of them.
\end{remark}


\section{Stochastic Barrier Functions Under Kalman Filtering}
\label{sec:sbf_kf}

In this section, we first review some background information about stochastic systems and processes, and Extended Kalman filters and their associated error bounds, and then derive conditions on a BF candidate to bound the risk despite errors in estimation.

As Lie derivatives study the evolution of a function of a deterministic variable, the infinitesimal/differential generators study the evolution of the expectation of a function of a stochastic variable $x(t)$
\cite{Jadbabaie2007}:
\begin{definition}[Infinitesimal/differential generator]\label{generator}
The infinitesimal/differential generator $A$ of a stochastic process $x(t)$ on $\mathbb{R}^{n }$ is defined by
\begin{equation*}\vspace{3pt}
    AB(x_0) = \lim_{t\rightarrow 0}\scalebox{1.1}{$ \frac{E[B(x(t))\;|\;x (0)= x_0]-B(x_0)}{t}$},
\end{equation*}
for all the functions $B:\mathbb{R}^{n }\rightarrow \mathbb{R}$ for which the above limit exists for all $x_0$ \cite{oksendal2003stochastic}.
\end{definition}

For a stochastic process $x(t)$ satisfying Eq. (\ref{eq1}), the differential generator $A$ of a twice differentiable function $B:\mathbb{R}^{n }\rightarrow \mathbb{R}$ is given by \cite{oksendal2003stochastic}
\begin{equation*}
    AB(x) = \frac{\partial B}{\partial x} F(x,u) + \frac{1}{2} tr \Big(G(t)^\top \frac{\partial ^2B}{\partial {x}^2} G(t)\Big).
\end{equation*}
where $F(x,u) = f(x)+g(x) u$, and $tr(.)$ computes the trace of a square matrix.

\subsection{Extended Kalman Filter Estimation}

Due to its appealing properties like ease of implementation, Extended Kalman Filter (EKF) is the most widely used estimator in practical applications for estimating states of the system of Eq. (\ref{eq1}) based on measurements in Eq. (\ref{eq2}). 

\begin{definition}[Extended Kalman Filter]\label{EKF}
An EKF is given by the following equations \cite{reif2000stochastic,brown2012introduction}:
\begin{itemize}
\item Initialization
         \begin{equation}\label{est}
        \hat{x}(0) = E(x(0)),\; P(0) = Var(x(0))
    \end{equation}
    \item Differential equation for the state estimate:
    \begin{equation}\label{est2}
        d\hat{x}(t) = F(\hat{x}(t),u(t))dt +K(t)(dy(t)-C\hat{x}(t) dt)
    \end{equation}
    \item Riccati differential equation:
    \begin{align}
        dP =& [A(t) P(t) +P(t) A(t) ^\top +Q(t)  \nonumber\\
        &- P(t) C ^\top R^{-1}(t)  C P(t)] dt
    \end{align}
    \item Kalman gain:
    \begin{equation}
        K(t) = P(t) C^\top R^{-1} (t)
    \end{equation}
\end{itemize}
    where $A(t) = \frac{\partial F}{\partial x}(\hat{x}(t),u(t))$, $Q(t)$ is a time-varying symmetric positive-definite matrix like $Q(t) = G(t) G(t)^\top$ and $R(t)$ is a time varying positive definite matrix like $R(t) = D(t)D(t)^\top$.
\end{definition}
\subsection{Error Bounds for the Extended Kalman Filter}
In order to use the state estimations $\xht$ for designing control policies that bound ``the probability of the true states $x(t)$ entering an unsafe region'', the estimation error needs to be bounded. 
As we will discuss in the following, to prove boundedness of the estimation error $x(t) - \xht$ using EKF some conditions need to be satisfied:

\begin{Assumption}\label{asm1}
We assume that the system of equations (\ref{eq1}), and (\ref{eq2}) satisfy the following conditions:
\begin{enumerate}
     \item The pair $[\frac{\partial F}{\partial x}(x,u), C], x \in \mathbb{R}^{n}, u\in U$ is uniformly detectable, i.e, there exist a bounded matrix valued function $\Lambda(x)$, and $\gamma>0$ such that $\forall \omega, x \in  \mathbb{R}^{n}, u \in \mathbb{R}^{l}, t\geq 0$
    \begin{equation}
        \omega^\top \Big(\frac{\partial F}{\partial x}(x,u)+\Lambda(x)C \Big)\omega \leq -\gamma \|\omega\|^2
    \end{equation}
    \item There exist $q,r >0$ such that $\forall t \geq 0, qI\leq Q(t)$, and $r I \leq R(t)$.
    \item Let $\phi$ be defined by $F(x(t),u(t)) - F(\xht, u(t)) = A(t)(x(t) - \xht)+\phi(x(t),\xht ,u(t))$. Then there exist $\epsilon_\phi,k_\phi$ such that:
    \begin{align}
        \|\phi(x(t),\xht ,u(t))\|\leq k_\phi \|x(t)-\xht\|^2
    \end{align}
    for all $x, \hat x , u$, with $\|x(t)-\xht\|\leq \epsilon _\phi$. 
    \end{enumerate}
    \end{Assumption}
\begin{proposition}[\cite{reif2000stochastic}]
 If the pair $[\frac{\partial F}{\partial x}(x,u), C]$ is uniformly detectable, then there exist real numbers $\bar \rho , \bunderline \rho>0$ such that the solution to the Riccati differential equation $P(t)$ satisfies: $\bunderline \rho I\leq P(t)\leq \bar \rho I$.
 \end{proposition}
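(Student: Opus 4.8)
The two inequalities are proved separately: the upper bound $P(t)\preceq\bar\rho I$ comes from uniform detectability via a completion-of-squares rewriting of the Riccati equation, while the lower bound $\bunderline{\rho}I\preceq P(t)$ comes from the strict positivity $Q(t)\succeq qI$ after passing to the information matrix. Both arguments use, in addition to Assumption~\ref{asm1}, uniform norm bounds on $C$, on $A(t)=\frac{\partial F}{\partial x}(\hat x(t),u(t))$ along the realized trajectory, on the detectability gain $\Lambda(\cdot)$, and --- as in \cite{reif2000stochastic} --- on $Q(t),R(t)$ from above (automatic here when $G,D$ are bounded, since $Q=GG^\top$, $R=DD^\top$); and the fact that $P(0)\succ0$.

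For the \emph{upper bound}, set $\mathcal A(t):=A(t)+\Lambda(\hat x(t))C$. Completing the square in the Kalman gain, the Riccati equation can be rewritten as
\[
\dot P=\mathcal A P+P\mathcal A^\top+Q+\Lambda R\Lambda^\top-\bigl(PC^\top R^{-1}+\Lambda\bigr)R\bigl(R^{-1}CP+\Lambda^\top\bigr).
\]
The last term is negative semidefinite, so $\dot P\preceq\mathcal A P+P\mathcal A^\top+N$ with $N:=Q+\Lambda R\Lambda^\top$ uniformly bounded. Assumption~\ref{asm1}(1) gives $\omega^\top\mathcal A(t)\omega\le-\gamma\|\omega\|^2$ for every $\omega$, hence the transition matrix $\Phi(t,s)$ of $\dot z=\mathcal A(t)z$ satisfies $\|\Phi(t,s)\|\le e^{-\gamma(t-s)}$ (differentiate $\|z\|^2$). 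The standard comparison lemma for linear matrix differential inequalities then yields $P(t)\preceq\Phi(t,0)P(0)\Phi(t,0)^\top+\int_0^t\Phi(t,s)N(s)\Phi(t,s)^\top\,ds$, and taking norms gives $P(t)\preceq\bigl(\|P(0)\|+\tfrac{1}{2\gamma}\sup_{s}\|N(s)\|\bigr)I=:\bar\rho I$.

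For the \emph{lower bound}, I would pass to the information matrix $Y(t):=P(t)^{-1}$; it is well defined at $t=0$ and the estimate below shows it stays bounded, so $P(t)$ remains invertible. Differentiating $YP=I$ and substituting the Riccati equation gives $\dot Y=-A^\top Y-YA-YQY+C^\top R^{-1}C$. Using $Q\succeq qI$ we get $-YQY\preceq-qY^2$, while boundedness of $A$, $C$ and $R^{-1}$ gives $-A^\top Y-YA\preceq 2\bar a\|Y\|I$ and $C^\top R^{-1}C\preceq cI$ with $\bar a:=\sup_t\|A(t)\|$ and $c:=\|C\|^2/r$. Hence $\mu(t):=\|Y(t)\|=\lambda_{\max}(Y(t))$ obeys the scalar Riccati inequality $\dot\mu\le-q\mu^2+2\bar a\mu+c$, so $\mu(t)\le\max\{\mu(0),\bar\mu\}=:\bar\mu'$, where $\bar\mu=(\bar a+\sqrt{\bar a^2+qc})/q$ is the larger root of the right-hand side. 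Therefore $Y(t)\preceq\bar\mu' I$, i.e. $P(t)\succeq(\bar\mu')^{-1}I=:\bunderline{\rho}I$.

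The calculations themselves are routine; the real work --- the main obstacle --- is making every constant genuinely uniform in $t\ge0$ and along the realized trajectory. One must justify the a priori norm bound $\sup_t\|A(t)\|<\infty$ (this needs a growth/Lipschitz bound on $f,g$, or an a priori bound on $\hat x(t)$ and $u(t)$), the boundedness of $\Lambda(\hat x(t))$, and --- for the information-matrix step --- that $P(t)$ never becomes singular. It is worth noting that uniform detectability alone suffices for the upper bound, whereas the \emph{strict} lower bound genuinely uses $Q(t)\succeq qI>0$.
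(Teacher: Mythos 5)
The paper offers no proof of this proposition---it is imported verbatim from \cite{reif2000stochastic}---and your argument is, in substance, the standard one used there: a completion-of-squares plus the detectability-induced exponential decay of $\Phi(t,s)$ for the upper bound, and the information-matrix Riccati inequality with $Q\succeq qI$ (Assumption~\ref{asm1}(2)) for the lower bound. The algebra checks out, and you correctly flag the only real caveats (uniform bounds on $A(t)$, $\Lambda$, $Q$, $R$ along the trajectory, $P(0)\succ 0$, and Dini-derivative care for $\lambda_{\max}$), so the proposal is a correct, essentially equivalent reconstruction of the cited result.
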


In the following Lemma, assuming that the above conditions are met, we show that for any confidence level $1-p_e$, there exists a confidence interval to which the supremum of the estimation error belongs with the confidence level $1-p_e$,
if the initial estimation error and the process and measurement noises are small enough.
\begin{Lemma}
Consider the system of equations (\ref{eq1}), and (\ref{eq2}), and the EKF defined in Def. \ref{EKF}. If conditions of Asm. \ref{asm1} are met, there exist $\epsilon_0>0,\delta>0$, such that if $\|x(0)-\hat x(0)\|\leq \epsilon_0$, and $G(t)G^\top(t)\leq \delta I$, and $D(t)D^\top(t)\leq \delta I$, then for any $p_e$ there exist $\epsilon>0$ such that:
\begin{equation}\label{epgama}
    Pr\{\underset{t\geq 0}{\sup}\;\|x(t) - \xht\| \leq \epsilon \} \geq 1-p_e
\end{equation}
    \end{Lemma}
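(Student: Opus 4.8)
The plan is to reduce the statement to the classical stochastic-stability analysis of the continuous-time EKF: set up the estimation-error process, certify it with the standard Kalman-filter Lyapunov function $e^\top P^{-1} e$, obtain a drift inequality valid as long as the error stays in a fixed ball, and then convert this into the confidence interval~(\ref{epgama}) by a stopping-time / maximal-inequality argument. This recovers the boundedness conclusion behind \cite{reif2000stochastic}.

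First I would form the error $e(t) := x(t) - \hat x(t)$. Subtracting~(\ref{est2}) from~(\ref{eq1}), using $dy - C\hat x\,dt = Ce\,dt + D\,dv$ and the decomposition of Asm.~\ref{asm1}.3, gives the SDE
\begin{equation*}
 de \;=\; \big[(A(t) - K(t)C)\,e + \phi(x,\hat x,u)\big]\,dt \;+\; G(t)\,dw \;-\; K(t)D(t)\,dv .
\end{equation*}
Then I would take $V(e,t) = e^\top P^{-1}(t)\,e$, which by the Proposition is sandwiched, $\tfrac1{\bar\rho}\|e\|^2 \le V \le \tfrac1{\underline\rho}\|e\|^2$, so that $P^{-1}$ exists and is $C^1$ along the Riccati flow. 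Applying the (time-dependent) generator formula to $V$ and substituting $\tfrac{d}{dt}P^{-1} = -P^{-1}\dot P\,P^{-1}$ from the Riccati equation, while using $P^{-1}KC = C^\top R^{-1}C$, the indefinite linear terms cancel — the classical Kalman-filter Lyapunov identity — leaving
\begin{equation*}
 A V \;=\; -\,e^\top\!\big(P^{-1}QP^{-1} + C^\top R^{-1}C\big)e \;+\; 2\,e^\top P^{-1}\phi \;+\; \mathrm{tr}\!\big(P^{-1}(GG^\top + KDD^\top K^\top)\big).
\end{equation*}
By Asm.~\ref{asm1}.2 together with the Proposition the quadratic term is $\le -\alpha\|e\|^2$ with $\alpha = q/\bar\rho^2$; by Asm.~\ref{asm1}.3 the middle term is $O(\|e\|^3)$, hence $\le \tfrac\alpha2\|e\|^2$ once $\|e\|$ stays below a fixed threshold $\varepsilon^\star \le \epsilon_\phi$; and since $K$ is uniformly bounded ($P \le \bar\rho I$, $R^{-1}\le r^{-1}I$), the trace term is $\le \kappa\,\delta$ whenever $GG^\top, DD^\top \le \delta I$. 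So $AV \le -\tfrac\alpha2\|e\|^2 + \kappa\delta \le -\alpha' V + \kappa\delta$ as long as $\|e\|\le\varepsilon^\star$.

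To finish, choose $\epsilon_0$ small enough that $V(e(0),0)\le \epsilon_0^2/\underline\rho$ is small, and $\delta$ small enough that the forced level $\kappa\delta/\alpha'$ lies well below the boundary level $(\varepsilon^\star)^2/\bar\rho$. Letting $\tau$ be the first hitting time of $\|e\| = \varepsilon^\star$, Dynkin's formula shows that the stopped, exponentially weighted process $\big(V(e(t\wedge\tau),t\wedge\tau) - \kappa\delta/\alpha'\big)e^{\alpha'(t\wedge\tau)}$ is a supermartingale; a maximal inequality then bounds $Pr\{\tau<\infty\}$ by a quantity proportional to $V(e(0),0)$ and $\delta$ — this is exactly the continuous-time EKF stochastic-boundedness conclusion of \cite{reif2000stochastic}, namely that $e(t)$ stays in a fixed ball for all time with probability one, hence $\sup_{t\ge0}\|e(t)\|<\infty$ a.s. Given this, the events $\{\sup_{t\ge0}\|x(t)-\hat x(t)\|\le\epsilon\}$ increase to a probability-one event as $\epsilon\uparrow\infty$, so by continuity from below of $P$, for every $p_e>0$ there is an $\epsilon$ with $Pr\{\sup_{t\ge0}\|x(t)-\hat x(t)\|\le\epsilon\}\ge 1-p_e$, which is~(\ref{epgama}). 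The main obstacle is the localization used above: the drift inequality only holds while the error remains in the ball where $\phi$ admits its quadratic bound, so the whole argument hinges on exploiting the smallness of the initial error and of the noise — through the stopping-time/supermartingale estimate — to guarantee that the error never escapes that ball except on a negligible event; this is the delicate part carried out in detail in~\cite{reif2000stochastic}.
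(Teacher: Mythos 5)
Your setup is essentially the paper's: the same error process $e=x-\hat x$, the same Lyapunov function $V=e^\top P^{-1}e$ with the sandwich $\tfrac{1}{\bar\rho}\|e\|^2\le V\le\tfrac{1}{\bunderline\rho}\|e\|^2$, the Riccati cancellation, and a supermartingale/maximal-inequality argument leaning on the Appendix of \cite{reif2000stochastic}. The problem is in your final paragraph. Your stopping-time estimate only yields $Pr\{\tau<\infty\}\le c$ with $c$ proportional to $V(e(0),0)$ and $\delta$ --- a small but strictly positive constant once $\epsilon_0$ and $\delta$ are fixed. That does \emph{not} give ``$e(t)$ stays in a fixed ball for all time with probability one,'' nor $\sup_{t\ge0}\|e(t)\|<\infty$ a.s.; on the escape event you control nothing. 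Consequently the continuity-from-below step does not deliver the Lemma: your bound on $Pr\{\sup_{t}\|e(t)\|>\epsilon\}$ has a floor of order $Pr\{\tau<\infty\}$ that does not shrink as $\epsilon\uparrow\infty$, while the Lemma fixes $\epsilon_0,\delta$ \emph{first} and then demands an $\epsilon$ for \emph{every} $p_e$, including $p_e$ below that floor.

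The paper sidesteps this by choosing $\epsilon_0=\min(\epsilon_\phi,\tfrac{q\bunderline\rho}{4k_\phi\bar\rho^2})$ and $\delta$ (from \cite{reif2000stochastic}) so that $AV\le 0$ outright, i.e.\ the forcing term is absorbed and $V$ is a genuine supermartingale; Doob's inequality then gives $Pr\{\sup_{t\ge0}\|\zeta(t)\|\ge\epsilon\}\le \bar\rho\,\epsilon_0^2/(\bunderline\rho\,\epsilon^2)$, a bound that tends to $0$ as $\epsilon\to\infty$, which is exactly what makes ``for any $p_e$ choose $\epsilon=(\bar\rho\epsilon_0^2/(\bunderline\rho\,p_e))^{1/2}$'' work. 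To repair your argument you must push the residual $\kappa\delta$ into the drift sign itself (nonpositive generator, as in the paper and in \cite{reif2000stochastic}) rather than carrying it as an additive, $\epsilon$-independent escape probability; your intermediate inequality $AV\le-\alpha' V+\kappa\delta$ is the right raw material, but you still need to argue why the constant term can be dominated so that the supermartingale property holds on the whole event of interest.
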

\begin{proof}
 Define $\zeta(t) = x(t)-\hat x(t)$. Based on \cite[Appendix]{reif2000stochastic} when $\epsilon_0 = \min (\epsilon_\phi, \frac{q\bunderline\rho}{4k_{\phi}\bar \rho^2})$, $\delta = \frac{q\bunderline\rho \bar \epsilon^2 }{4k_{\phi}\bar \rho^2}$, where $\bar \epsilon$ is a lower bound to the estimation error ($\bar \epsilon \leq \|\zeta(t)\|$), the differential generator of the random process $V(\zeta(t), t) =\zeta(t)^\top P^{-1}(t) \zeta(t)$ satisfies $A V(\zeta(t), t) \leq 0$ hence $V(\zeta, t)$ is a supermartingale. As a result, using the Doob's martigale inequality we obtain:
\begin{align}
    &Pr\{\underset{t\geq 0}{\sup}\;\|x(t) - \xht\|\geq \epsilon\} = Pr\{\underset{t\geq 0}{\sup}\;\zeta(t) ^\top \zeta(t) \geq \epsilon^2\}\leq\nonumber\\ & Pr\{\underset{t\geq 0}{\sup}\;V(\zeta(t), t)\geq \frac{\epsilon^2}{\bar \rho}\}
   \leq \frac{\bar \rho}{\epsilon^2} V(\zeta(0), 0) \leq \frac{\bar \rho \epsilon_0^2}{\bunderline \rho \epsilon^2}
\end{align}
Hence for any $p_e$, one can choose $\epsilon = (\frac{\bar \rho \epsilon_0^2}{\bunderline \rho p_e})^{1/2}$, to complete the proof.
\end{proof}
\vspace{5pt}

\subsection{Bounded Risk Using Stochastic Control Barrier Functions Under Kalman Filtering}

In this section, we present the main result of the paper. In the following, we derive conditions on the estimated state that if satisfied the probability that ``the true states of the system enter the unsafe set $X_\uc$ within a finite time interval $[t,t+T]$'' becomes bounded by some desired value $\bar p$.
To achieve this, similar to the work in \cite{clark2019control}, we first construct a safety margin around the zero level set of the function $h$ such that outside this safety margin, estimation errors of up to size $\epsilon$ will not result in an unsafe behavior.
The size of this safety margin can be computed by mapping the error bounds from the state space to the space of the function $h$ using the following definition:
\begin{equation}\label{12}
    h_\epsilon = \sup \{h(x) : \|x-x'\|\leq\epsilon  \mbox{ for some $x'$ s.t. } h(x') \leq 0\}
\end{equation}

As pictured in Fig. \ref{fig1}, $h_\epsilon$ defines a safety margin around the zero level set of $h$ that based on the following Lemma can compensate for errors of up to size $\epsilon$. 
\begin{Lemma}\label{lem}
Consider any time $t$ such that $\|x(t) - \hat x(t)\|\leq \epsilon$. If $h(\hat x (t))>h_\epsilon$, then $h(x(t))>0$, i.e, $x(t)\notin X_\uc$.
\end{Lemma}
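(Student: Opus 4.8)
The plan is to argue by contradiction and simply unwind the definition of $h_\epsilon$ in Eq.~(\ref{12}). Assume the hypotheses $\|x(t)-\hat x(t)\|\leq\epsilon$ and $h(\hat x(t))>h_\epsilon$ hold, but, contrary to the claim, $h(x(t))\leq 0$. Then $x(t)$ is itself a point of $\mathbb{R}^n$ with $h(x(t))\leq 0$, so it is an admissible witness for the "some $x'$" quantifier in Eq.~(\ref{12}).

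Next I would take $x' := x(t)$ and observe that $\hat x(t)$ satisfies $\|\hat x(t)-x'\|=\|\hat x(t)-x(t)\|\leq\epsilon$ with $h(x')\leq 0$; hence $\hat x(t)$ belongs to the set $\{\,x : \|x-x'\|\leq\epsilon \text{ for some } x' \text{ s.t. } h(x')\leq 0\,\}$ over which the supremum of $h$ is, by definition, $h_\epsilon$. Therefore $h(\hat x(t))\leq h_\epsilon$, which contradicts the assumed $h(\hat x(t))>h_\epsilon$. Consequently $h(x(t))>0$, and by Eq.~(\ref{unsaf}) this is exactly $x(t)\notin X_\uc$, completing the proof.

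There is essentially no analytical obstacle here: the argument uses only the defining inequality of a supremum (an upper bound on membership), not its attainment, so no continuity, compactness, or Lipschitz property of $h$ is invoked in the core step. The only points to be careful about are (i) keeping the direction of the inequality straight and correctly identifying $x(t)$ with the existentially quantified $x'$ and $\hat x(t)$ with the free variable $x$ in Eq.~(\ref{12}), and (ii) noting that the case $h_\epsilon=+\infty$ is harmless, since then the hypothesis $h(\hat x(t))>h_\epsilon$ is vacuous (finiteness of $h_\epsilon$ is only relevant when, e.g., $X_\uc$ is bounded and $h$ is locally Lipschitz, and is not needed for the implication itself).
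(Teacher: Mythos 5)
Your proof is correct and follows essentially the same route as the paper's: both argue by contradiction, take $x' := x(t)$ as the witness in the definition of $h_\epsilon$, and conclude $h(\hat x(t)) \leq h_\epsilon$ from the supremum, contradicting the hypothesis. The only cosmetic difference is that the paper inserts the intermediate bound $h(\hat x(t)) \leq \sup\{h(x) : \|x - x(t)\| \leq \epsilon\}$ before passing to $h_\epsilon$, which your argument collapses into one step.
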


\begin{proof}
Assume by contradiction that $x(t)\in X_\uc$.
Hence, $h(x(t))\leq 0$. By assumption $\|x(t) - \hat x(t)\|\leq \epsilon$, and we have:
\begin{align*}
  & h(\hat x(t))\leq \sup\{h(x): \|x - x(t)\|\leq \epsilon\}\\ &\leq \sup\{h(x): \|x - x'\|\leq \epsilon \mbox{ for some $x'$ s.t } h(x')\leq 0\}\\ 
  &= h_\epsilon
\end{align*}
which contradicts the assumption that $h(\xht)> h_\epsilon$. Hence $h(x(t))>0$, and, $x(t)\notin X_\uc$.
\end{proof}

\begin{figure}[t]
  \centering
        \includegraphics[width=.7\linewidth]{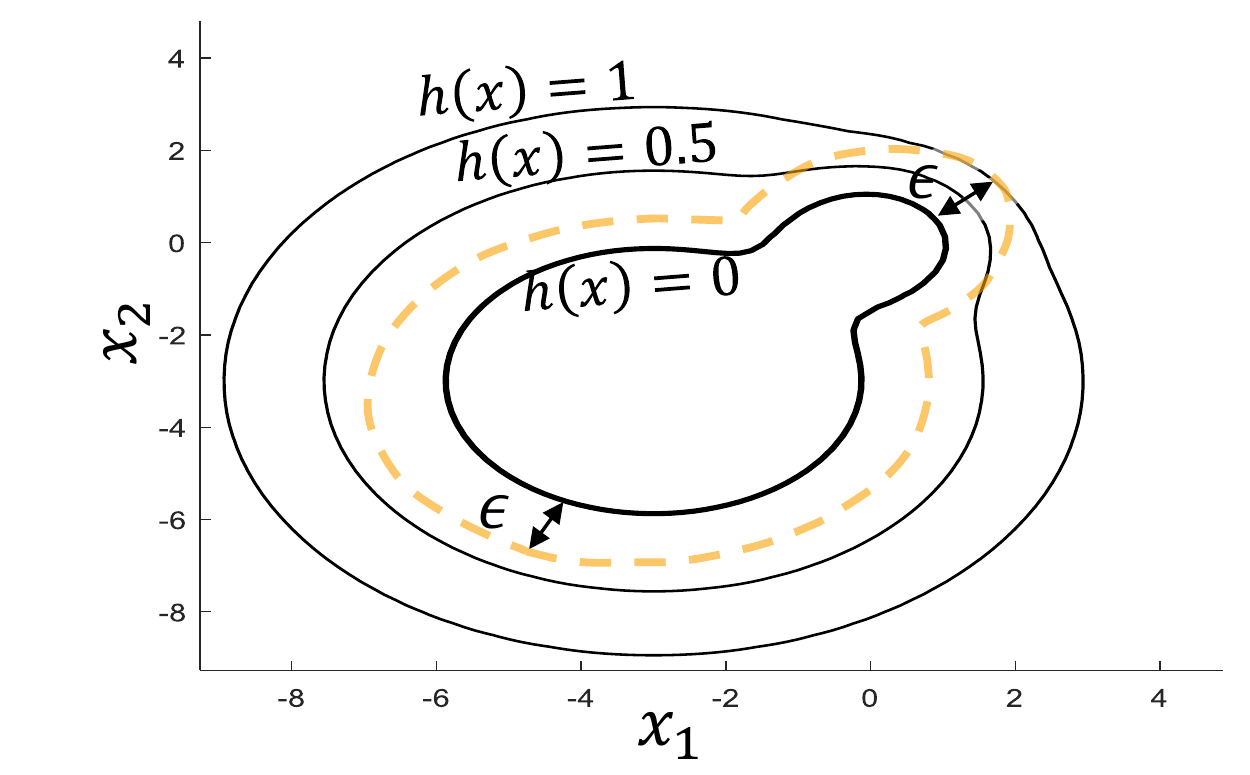}
  \caption{Given a bound $\epsilon$ on the error $x-\hat x$, $h_\epsilon$ can be computed using Eq. (\ref{12}). In this example $h_\epsilon = 1$.}
  \label{fig1}
\end{figure}

Let us show the complement of a set $\A$ with $\stcomp \A$. Define $\Y= \{ \hat x \;|\;h(\hat x)\leq h_\epsilon \}$, and $\Zeta = \{ \zeta \;|\;\|\zeta\|\leq \epsilon \}$. Based on Lemma \ref{lem} we have: $\stcomp\Y \cap \Zeta \subset \stcomp{X_\uc}$, and hence $Pr\{X_\uc\} \leq 1- Pr\{ \stcomp\Y \cap \Zeta \} = Pr\{ \Y \cup \stcomp \Zeta \} = Pr\{ \Y \cap  \Zeta \} + Pr\{\stcomp \Zeta\} = Pr\{ \Y | \Zeta \}  Pr\{\Zeta \} + Pr\{\stcomp \Zeta\}  $. As a result defining $\bar h(x) = h(x)-h_\epsilon$, and $\hat p_\uc = Pr\{ \underset{t\leq \tau \leq t+T}{\inf}  \bar h(\hat x (\tau))\leq 0 \;|\; \underset{t\geq 0}{\sup} \|x(t) - \xht\|\leq \epsilon \}$, we have:
\begin{align}\label{pb}
   &p_\uc
   =Pr\{ \underset{t\leq \tau \leq t+T}{\inf}  h(x (\tau))\leq 0\}\leq  (1-p_e)\hat p_\uc +p_e
\end{align}

Therefore, by designing a control law that guarantees $\hat p_\uc \leq \frac{\bar p -p_e}{1-p_e}$, we also solve Problem \ref{prb1}. 
Similar to our work in \cite{yaghoubi2020risk}, we use this computed upper bound to derive conditions on the evolution of a barrier function candidate $B$ to certify $p_\uc$ is bounded to $\bar p$. The conditions create linear constraints on the control actions based on which the control inputs are computed in real-time.
\begin{definition}\label{def3}
A twice differentiable function $B:\mathbb{R}^n\rightarrow \mathbb{R}_
+$ is a Barrier Function (BF) candidate for the SDE (\ref{eq1}), and (\ref{eq2}) w.r.t the set $X_\uc $, if 
\begin{align}
&B(x)\geq 0 \quad \forall x \in  \mathbb{R}^n,\mbox{ and}\\
    &B(x)\geq 1 \quad \forall x \in  \{ x \;|\;\bar h(x)\leq 0 \}.
\end{align}
\end{definition}

The difference here with \cite{yaghoubi2020risk} is that the BF candidate as defined above should create a barrier around the $h_\epsilon$-level set of $h(x)$ instead of its zero-level set. For a differentiable function $h$, we can choose $B(x) = e^{-\alpha \bar h(x)}$ where $\alpha$ is a positive real number.

As discussed before, in what follows we assume that the functions $f,g$, and the control input $u$ satisfy appropriate conditions to guarantee existence of unique solutions to the differential equations (\ref{eq1}), and (\ref{eq2}). Bounded control inputs $u\in U$ satisfy such appropriate conditions \cite{oksendal2003stochastic}.

\begin{theorem}\label{th2}
Consider the SDE in Eq. (\ref{eq1}), and (\ref{eq2}), and suppose there exist a BF candidate $B(x)$, positive variables $\ai,\bb  \geq 0$, and the control input $u(t)\in U$ s.t. the  condition: 
\begin{align}\label{b1}
 &\frac{\partial B}{\partial x} F(\hat{x}(t),u(t))+ \epsilon \| \frac{\partial B}{\partial x} K(t)C\| + \nonumber\\
 &\frac{1}{2} tr (D(t)^\top K(t)^\top \frac{\partial ^2B}{\partial {x}^2} K(t) D(t)) 
\leq -\ai  B(\hat x) + \bb,
\end{align}
on the function $B$, and one of the following conditions on the variables $a,b$ 
\begin{align}\label{cond3}
    \ai  &= 0, \bb  \leq (p_{new}-B_0)/T, \\\label{cond3.5}
    \ai  &> 0, \bb  \leq \scalebox{1}{$\min(\ai ,-\frac{1}{T}\ln{\frac{1-p_{new}}{1-B_0}})$},\mbox{ or}\\ \label{cond4}
    \ai  &>0, \scalebox{1}{$\frac{\bb (e^{\bb  T}-1)}{p_{new} e^{\bb  T}-B_0}$}\leq \ai \leq \bb 
\end{align}
are satisfied $\forall \xht \in \mathbb{R}^n$, where $B_0 = B(\xht), \; p_{new}=\frac{\bar p -p_e}{1-p_e}$, and Eq. (\ref{epgama}) holds for $p_e,\epsilon>0$. Then for all $t\geq0$, $p_\uc\leq \bar p$.
\end{theorem}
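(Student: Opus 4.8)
The plan is to reduce the theorem to the single inequality $\hat p_\uc \le p_{new}$, since Eq.~(\ref{pb}) already establishes $p_\uc \le (1-p_e)\hat p_\uc + p_e$ and $p_{new} = \frac{\bar p - p_e}{1-p_e}$, so $\hat p_\uc \le p_{new}$ yields $p_\uc \le (1-p_e)p_{new}+p_e = \bar p$, which is the claim. Thus everything comes down to bounding the conditional probability $\hat p_\uc = Pr\{\inf_{t\le\tau\le t+T}\bar h(\hat x(\tau))\le 0 \mid E\}$ by $p_{new}$, where $E := \{\sup_{s\ge 0}\|x(s)-\hat x(s)\|\le\epsilon\}$ and $\epsilon,p_e$ are as in Eq.~(\ref{epgama}).

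First I would invoke the defining property of the BF candidate (Def.~\ref{def3}): $B\ge 1$ on $\{x:\bar h(x)\le 0\}$. Since $h$, hence $\bar h = h-h_\epsilon$, is continuous and the sample paths of $\hat x$ are continuous, the infimum of $\bar h(\hat x(\cdot))$ over the compact interval $[t,t+T]$ is attained, so $\{\inf_{[t,t+T]}\bar h(\hat x)\le 0\}\subseteq\{\sup_{[t,t+T]}B(\hat x)\ge 1\}$ pathwise, and therefore $\hat p_\uc \le Pr\{\sup_{[t,t+T]}B(\hat x(\tau))\ge 1 \mid E\}$. Next, conditioning on $E$ so that $\|\zeta(s)\|\le\epsilon$ for all $s$ (write $\zeta=x-\hat x$), I would substitute the measurement model~(\ref{eq2}) into the estimator equation~(\ref{est2}) to get $d\hat x = [F(\hat x,u)+K(t)C\zeta(t)]\,dt + K(t)D(t)\,dv$, so the differential generator of $B$ along $\hat x$ equals $\frac{\partial B}{\partial x}[F(\hat x,u)+KC\zeta] + \frac{1}{2}tr(D^\top K^\top\frac{\partial^2 B}{\partial x^2}KD)$; bounding $\frac{\partial B}{\partial x}KC\zeta \le \|\frac{\partial B}{\partial x}KC\|\,\|\zeta\| \le \epsilon\|\frac{\partial B}{\partial x}KC\|$ and applying hypothesis~(\ref{b1}) yields the key inequality: along the conditioned trajectory the generator of $B(\hat x)$ is at most $-aB(\hat x)+b$.

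From this generator inequality I would appeal to the stochastic-barrier comparison principle of \cite{yaghoubi2020risk,kushner1967stochastic}: stopping at $\tau_1 = \inf\{s\ge t: B(\hat x(s))\ge 1\}$ — before which $B(\hat x)<1$, so the stopped process is bounded and Dynkin's formula and Doob's maximal inequality apply — the process $B(\hat x(s\wedge\tau_1))$ admits an additive or exponential martingale transform giving an explicit bound $\rho$ on $Pr\{\sup_{[t,t+T]}B(\hat x)\ge 1 \mid E\}$ in terms of $B_0 = B(\hat x(t))$: namely $\rho = B_0+bT$ when $a=0$ (as $B(\hat x(s))-bs$ is then a supermartingale); $\rho = 1-(1-B_0)e^{-bT}$ when $a\ge b>0$ (here $e^{bs}(1-B(\hat x(s)))$ is a submartingale precisely because $a-b\ge 0$); and $\rho = B_0 e^{-bT}+\frac{b}{a}(1-e^{-bT})$ when $0<a\le b$ (from a shifted exponential transform $e^{bs}(B(\hat x(s))-c)$ together with $B(\hat x)<1$ before $\tau_1$, the sign change of $a-b$ being exactly what separates this case from the previous one). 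A short computation then shows that, in the three regimes, $\rho\le p_{new}$ is equivalent respectively to~(\ref{cond3}), to~(\ref{cond3.5}), and to~(\ref{cond4}) — for instance $1-(1-B_0)e^{-bT}\le p_{new} \iff b \le -\frac{1}{T}\ln\frac{1-p_{new}}{1-B_0}$, and $B_0 e^{-bT}+\frac{b}{a}(1-e^{-bT})\le p_{new} \iff a \ge \frac{b(e^{bT}-1)}{p_{new}e^{bT}-B_0}$. Hence under whichever of the three conditions holds we obtain $\hat p_\uc\le\rho\le p_{new}$, and the theorem follows.

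The step I expect to be the real obstacle is making the conditioning on $E$ rigorous. The event $E$ is a pathwise constraint on the error $\zeta$ over all time, and conditioning on it generically perturbs the law of $\hat x$ (a Doob-type reweighting through the shared driving noise $dv$), so one must justify that the generator inequality, and with it the super-/sub-martingale property, survives passage to the conditional measure $Pr\{\cdot\mid E\}$ — the natural route being to couple the argument to the stopping time $\tau_\epsilon = \inf\{s:\|\zeta(s)\|>\epsilon\}$ and carry the conditional maximal inequality through carefully. Secondary, but still needing attention, are the regularity conditions that make Dynkin's formula and the martingale property of the stochastic-integral term legitimate (boundedness of the Kalman gain $K$ and of $D$, and control of $\frac{\partial B}{\partial x}$ and $\frac{\partial^2 B}{\partial x^2}$ on the sublevel set $\{B<1\}$ where the stopped trajectory lives, e.g. for $B=e^{-\alpha\bar h}$); by comparison the three-case algebra verifying $\rho\le p_{new}$ is routine.
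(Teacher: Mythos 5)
Your proposal is correct and follows essentially the same route as the paper: substitute the measurement model into the EKF equation, compute the generator of $B(\hat x)$, absorb the cross term $\frac{\partial B}{\partial x}K(t)C(\hat x - x)$ into $\epsilon\|\frac{\partial B}{\partial x}K(t)C\|$ on the event of Eq.~(\ref{epgama}), conclude $AB(\hat x)\le -aB(\hat x)+b$, and then invoke the finite-time supermartingale bounds (which the paper simply cites as \cite[Thm.~2]{yaghoubi2019worst}) together with Eq.~(\ref{pb}). Your three-case algebra matches the stated conditions (\ref{cond3})--(\ref{cond4}), and the subtlety you flag about conditioning on the pathwise error event $E$ is real but is equally glossed over in the paper's own proof, which conditions informally by ``assuming $\|x(t)-\hat x(t)\|<\epsilon$.''
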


\begin{proof}
From Eq. (\ref{eq2}), and (\ref{est}), we have:
 \begin{align*}
 &\scalebox{1}{$d\hat{x}(t) =$} \\
 &\scalebox{0.96}{$ F(\hat{x}(t),u(t))dt +K(t)\Big(Cx(t) dt + D(t) dv(t)-C\hat{x}(t) dt\Big)=$}
 \\&\scalebox{.96}{$\Big(F(\hat{x}(t),u(t))+K(t)(C\hat{x}(t)-Cx(t))\Big)dt +K(t)D(t) dv(t)$}
\end{align*}
and hence from Def. \ref{generator} of the infinitesimal generator:
\begin{align*}
 & \scalebox{1}{$AB(\hat{x}(t)) = \frac{\partial B}{\partial x} \Big( F(\hat{x}(t),u(t))+K(t)(C\hat{x}(t)-Cx(t))\Big)$}
 \\&+\scalebox{1}{$\frac{1}{2} tr (D(t)^\top K(t)^\top \frac{\partial ^2B}{\partial {x}^2} K(t) D(t)) $} 
\end{align*}
Assuming $\|x(t)-\hat x(t)\|<\epsilon$, we have $\frac{\partial B}{\partial x} K(t)C(\hat{x}(t)-x(t)) \leq \epsilon  \|\frac{\partial B}{\partial x} K(t)C\| $, and hence:
\begin{align*}
 AB(\hat{x}(t)) \leq \frac{\partial B}{\partial x} F(\hat{x}(t),u(t))+ \epsilon  \|\frac{\partial B}{\partial x} K(t)C\| +\\
 \frac{1}{2} tr (D(t)^\top K(t)^\top \frac{\partial ^2B}{\partial {x}^2} K(t) D(t)).  
\end{align*}
So if Eq. (\ref{b1}) holds, then $AB(\xht)\leq -aB(\xht)+b$, and, hence, if one of (\ref{cond3}-\ref{cond4}) holds, based on \cite[Thm. 2]{yaghoubi2019worst}
we have $ \hat p_\uc\leq \bar p_{new}$, and from Eq. (\ref{pb}), we get $p_\uc \leq \bar p$.
\end{proof}

A BF candidate $B$ that satisfies the conditions of Thrm. \ref{th2} is a risk-bounding Stochastic Control Barrier Function (SCBF). Any control policy that satisfies conditions of Thrm. \ref{th2} at any time $t\geq0$ for some SCBF bounds $p_\uc$ to $\bar p$.

\section{Risk-Constrained Optimization-Based Control Design}\label{secqp}
Consider a system of the form (\ref{eq1}), and (\ref{eq2}) for which the performance objective is expressed by minimizing the expected value of the objective function $J(u(t),x(t)) = \frac{1}{2}u^\top(t) Q(x(t)) u(t)+H(x(t))^\top u(t)$ where $\forall t\;:\: Q(x(t))\in \mathbb{R}^{l\times l}$ is a positive definite matrix, and $H(x(t))\in  \mathbb{R}^{l}$. In order to design risk constrained controllers, this performance objective can be unified with constraints of Thrm. \ref{th2} in the following program:
\begin{align}\label{prog}
       \min_{u(t)\in U, \ai,\bb}&J(u(t),\hat x(t))\\
\mbox{ s.t.}& \begin{cases}
    \mbox{Ineq. (\ref{b1})}\\
    \mbox{Ineq. (\ref{cond3}), or (\ref{cond3.5}), or (\ref{cond4})} 
    \end{cases}. \nonumber
\end{align}

The constraint of Eq. (\ref{b1}) is linear in the control input $u$, and parameters $a,b$.  
So as to achieve a Quadratic program (QP) that can be solved using efficient solvers, if the second constraint is imposed by Eq. (\ref{cond3.5}), or Eq. (\ref{cond4}), $\ai$ or $b$ need to be fixed to positive values respectively. Then, the program (\ref{prog}) which becomes a QP can be solved by searching over the second parameter and the control input value $u$ to find an optimal policy that bounds the risk to $\bar p$. The program can be extended to account for multiple unsafe sets $X_{\uc,i} =:\{x\;|\; h_i(x)\leq 0\}$ by adding constraints of the form (\ref{b1}), and one of (\ref{cond3}), (\ref{cond3.5}), or (\ref{cond4}) for each unsafe region.  As it is also discussed in \cite{yaghoubi2020risk}, to avoid unnecessary risk when possible, parameters $a$ and $b$ can be added to the objective function with negative and positive multipliers respectively (note that both larger $a$ values, and smaller $b$ values represent more conservative actions).

\subsection{Feasibility of the problem}
Existence of multiple safety constraints w.r.t multiple unsafe sets in addition to the bounds on the control input $u\in U$ may result in an infeasible program (\ref{prog}) at some states $x(t)$ -- or their corresponding estimates $\xht$. Consider for instance an autonomous vehicle that is put into a dangerous situation by another agent. This agent might have unsafely steered into the autonomous vehicle's lane, or hit the brake in a high-speed highway. Given the state of the autonomous vehicle, bounding the risk to its desired value may not be possible. In practice, even if the system cannot limit the risk to its desired value, it should still apply the best control policy that minimizes the risk. To achieve this, we can add a slack variable $s$ to the right hand side of Equations (\ref{cond3}), and (\ref{cond3.5}), or use it to loosen the upper and lower limits of $a$ in Eq. (\ref{cond4}), as follows:
\begin{align}\label{condmod1}
    \ai  &= 0, \bb  \leq (p_{new}-B_0)/T+s, \\\label{condmod2}
    \ai  &> 0, \bb  \leq \scalebox{1}{$\min(\ai ,-\frac{1}{T}\ln{\frac{1-p_{new}}{1-B_0}})$}+s,\mbox{ or}\\ \label{condmod3}
    \ai  &>0, \scalebox{1}{$\frac{\bb (e^{\bb  T}-1)}{p_{new} e^{\bb  T}-B_0}$} -s \leq \ai \leq \bb +s
\end{align}
and modify program (\ref{prog}) as follows:
\begin{align}\label{prog2}
       \min_{u(t)\in U, \ai,\bb, s}&J(u(t),\hat x(t))+cs\\
\mbox{ s.t.}& \begin{cases}
    \mbox{Ineq. (\ref{b1})}\\
    \mbox{Ineq. (\ref{condmod1}), or (\ref{condmod2}), or (\ref{condmod3})} \\
    s\geq 0
    \end{cases}. \nonumber
\end{align}
where $c>\!>0$ is a constant variable. Note that when $c\rightarrow \infty$, and program (\ref{prog}) is feasible, the solution to program (\ref{prog2}) is the same as the solution to program (\ref{prog}) since $s$ can be chosen as zero to minimize the cost function while constraints are still satisfied. In addition, when program (\ref{prog}) is infeasible, program (\ref{prog2}) returns the least unsafe feasible control policy as $c>\!>0$.

\section{Case Study}

Consider an ego vehicle modeled by a unicycle model as follows:
\begin{equation}\label{uni-model}
  d {x}_r(t) =  g_r(x_r(t))u(t)dt = \scalebox{0.9}{${\small \begin{bmatrix}
    \cos(\theta_r) \quad 0\\
    \sin(\theta_r) \quad 0\\
   \; 0 \quad \qquad 1
    \end{bmatrix}}\begin{bmatrix}
   u_1\\
   u_2
    \end{bmatrix}$}dt. 
\end{equation}
where $x_r = [\mathsf{p}_r,\theta_r]^\top =[\mathsf{p}_r^x,\mathsf{p}_r^y,\theta_r]^\top$, $u = [u_1 , u_2]^\top$ consist of the ego vehicle's states and inputs, $\mathsf{p}_r^x,\mathsf{p}_r^y,\theta_r$ describe the x and y position of the vehicle and its heading angle respectively, and $u_1, u_2$ are its linear and angular velocities. The initial condition of the ego vehicle is $x_r(0) =[0,0,0]^\top$ and its inputs are considered to be constrained in the set $U = \{0.2\leq u_1\leq 2, \; -\pi/6\leq u_2\leq \pi/6\}$. Assume that the ego vehicle is in a highway scenario in which other traffic participants are modeled using the following SDE:
\begin{align}\label{traffic}
  d {x}_o(t) &= \scalebox{1}{${\small \begin{bmatrix}
   v_o\\
    0\\
    v_d+ (\mathsf{p}_{o}^x-\mathsf{p}_r^x)e^{(c_1-c_2\|\mathsf{p}_{o}-\mathsf{p}_r\|^2)}-v_o
    \end{bmatrix}}$}dt+ Gdw(t)\\\label{mtraffic}
 dy_o(t) &= Cx_o(t) dt + D dv(t)    . 
\end{align}
where the agents' state $x_{o} = [\mathsf{p}_{o}, v_o]^\top = [\mathsf{p}_{o}^x,\mathsf{p}_{o}^y, v_o]^\top$ describes their position $\mathsf{p}_{o}$, and its velocity $v_o$. The agents try to maintain the desired velocity $v_d$ of the highway but they are also equipped with a mechanism to prevent accidents with the ego vehicle when it is in their vicinity ($c_1, c_2$ are positive constants). We assume that the state of the ego vehicle $x_r$ is known, and the process noise in Eq. (\ref{traffic}) ($G = 0.1\times I_3$) includes possible inaccuracies in estimation of $x_r$ from the other agents' perspectives. However, the ego vehicle can only measure the positions of other agents ($\mathsf{p}_{o}$), i.e,
\scalebox{1}[0.8]{$C = \begin{bmatrix}
   1 \;0\; 0 \\
    0 \;1\; 0
    \end{bmatrix}$}, and there is also an error associated with this measurement (\scalebox{1}[0.8]{$D = \begin{bmatrix}
   0.25 \;0 \\
    0 \;0.2
    \end{bmatrix}$}).
    
Control policies $u_1$ and $u_2$ should be designed so as to bound probabilities of an ego car's imminent accident with other traffic participants ($p_\uc$ as defined in Eq. (\ref{pu})) to 0.1 ($T = 1$, and $h(x_r,x_o) = \|\mathsf{p}_{x}-\mathsf{p}_{o}\|^2-r_\uc^2$, $r_\uc = 0.25$). At the same time, control policies should guide the ego vehicle to the goal set 
\begin{equation}\label{G-set}
    X_\g =: \{x_r\;|\; \|\mathsf{p}_r- x_g\|^2 - r_g^2 \leq 0 \},   
\end{equation}
where $x_g$ is the center and $r_g$ is the radius of the goal set. In our simulation, the goal set is the center of the top-most lane in the highway as depicted in Fig. \ref{fig:inits} with green color.

\begin{remark}
Constraints imposed by a control lyapunov like function can be added to the program (\ref{prog2}) to lead the states $x_r$ to a goal set $X_g$ (for more details see \cite{yaghoubi2020risk}). 
\end{remark}

The control inputs $u_1$ and $u_2$ (the linear and angular velocities) in Eq. (\ref{uni-model}) have different relative degrees w.r.t $h$. Hence to avoid involved control design methods that tackle this issue, we use a similar approach to \cite{yaghoubi2020risk,lindemann2020control} and use a near-identity diffeomorphism to approximate the system of Eq (\ref{uni-model}). Define $ \bar x_r =[\bar{\mathsf{p}}_r,\theta_r]^\top$, where
$\bar {\mathsf{p}}_r := {\mathsf{p}}_r +l
    R(\theta_r)$, $l>0$ is a small constant that allows for approximating $\mathsf{p}_r$ with the needed precision with $\bar {\mathsf{p}}_r$, and $R(\theta_r) = \small{\begin{bmatrix}
     \cos(\theta_r) \quad -\sin(\theta_r)\\
    \sin(\theta_r) \qquad \cos(\theta_r)
    \end{bmatrix}}$. Hence:
\begin{equation}\label{trans}
    \dot {\bar x}_r (t)= {\small\begin{bmatrix}
    \cos(\theta_r) \quad -l\sin(\theta_r)\\
    \sin(\theta_r) \qquad l\cos(\theta_r)\\
   \; 0 \qquad \qquad 1
    \end{bmatrix}}\begin{bmatrix}
   u_1\\
   u_2
    \end{bmatrix}. \vspace{-2pt}
\end{equation}
Note that the maximum distance of $x_r$ from $\bar x_r$ is $l$. Hence, we can redefine $h$ as $h(\bar x_r, x_o)=\|\bar {\mathsf{p}}_r-\mathsf{p}_{o}\|^2-(0.25+l)^2$ to account for the approximation error.

We consider $15$ traffic participants around the ego vehicle in the highway that are initially positioned as pictured in the top subplot of Fig \ref{fig:inits}. Their initial velocities are picked from a normal distribution with mean $v_d$ and variance 0.1. They are modelled based on the SDE (\ref{traffic}), and the ego vehicle has access to noisy measurements $y_o$ based on which it should pick the control inputs $u_1,u_2$. The states of other traffic participants $x_o$ were estimated using an EKF. We selected the parameters $p_e$, and $\epsilon$ based on simulations of the traffic participants in presence of the ego vehicle. In these simulations the estimation error was less than 0.5 for over 99\% of the data. Hence, we took $p_e = 0.01$, and $\epsilon = 0.5$. Noticing the definition of $h(\bar x_r,x_o)$ in our case study, and based on Eq. (\ref{12}), we set $h_\epsilon = \epsilon^2$. So we chose the barrier function candidate as $B(\bar x_r,\hat x_o) = e^{-\gamma \bar h(\bar x_r,\hat x_o)}$ where $\bar h (\bar x_r,\hat x_o) = h (\bar x_r,\hat x_o) - h_\epsilon$. 

\begin{figure}[t]
  \centering
        \includegraphics[width=1\linewidth]{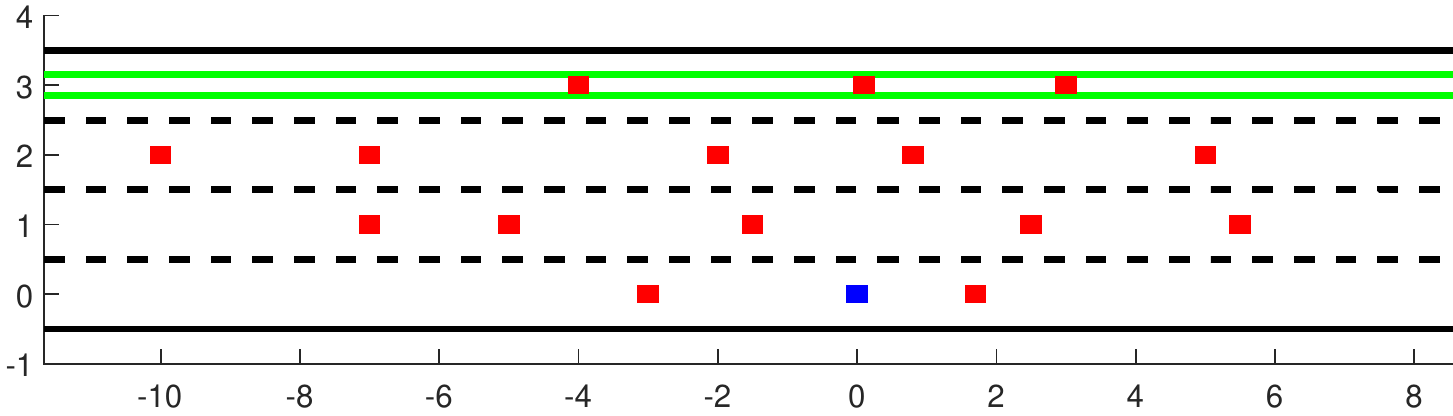}
        \includegraphics[width=1\linewidth]{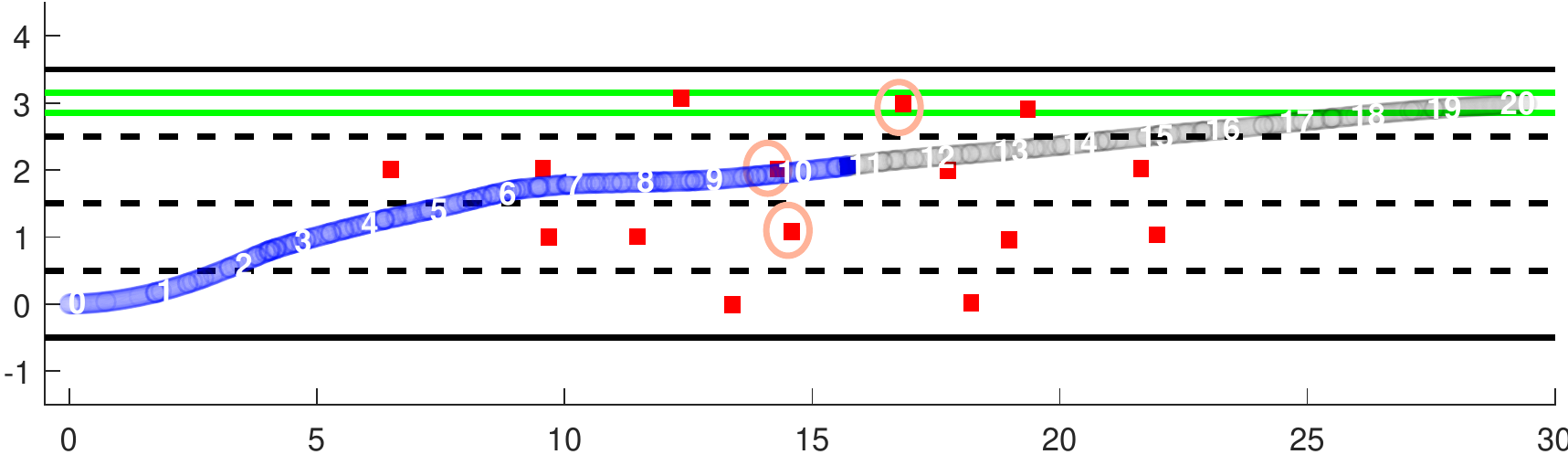}
  \caption{The top figure, shows the initial position of the ego vehicle and traffic participants. The ego vehicle's goal set in the top-most lane is shown in green. The bottom figure is a snapshot of the traffic participants at $t=11$, and the trajectory of the ego vehicle.}
  \label{fig:inits}
\end{figure}
\begin{figure}[t]
  \centering
        \includegraphics[width=1\linewidth]{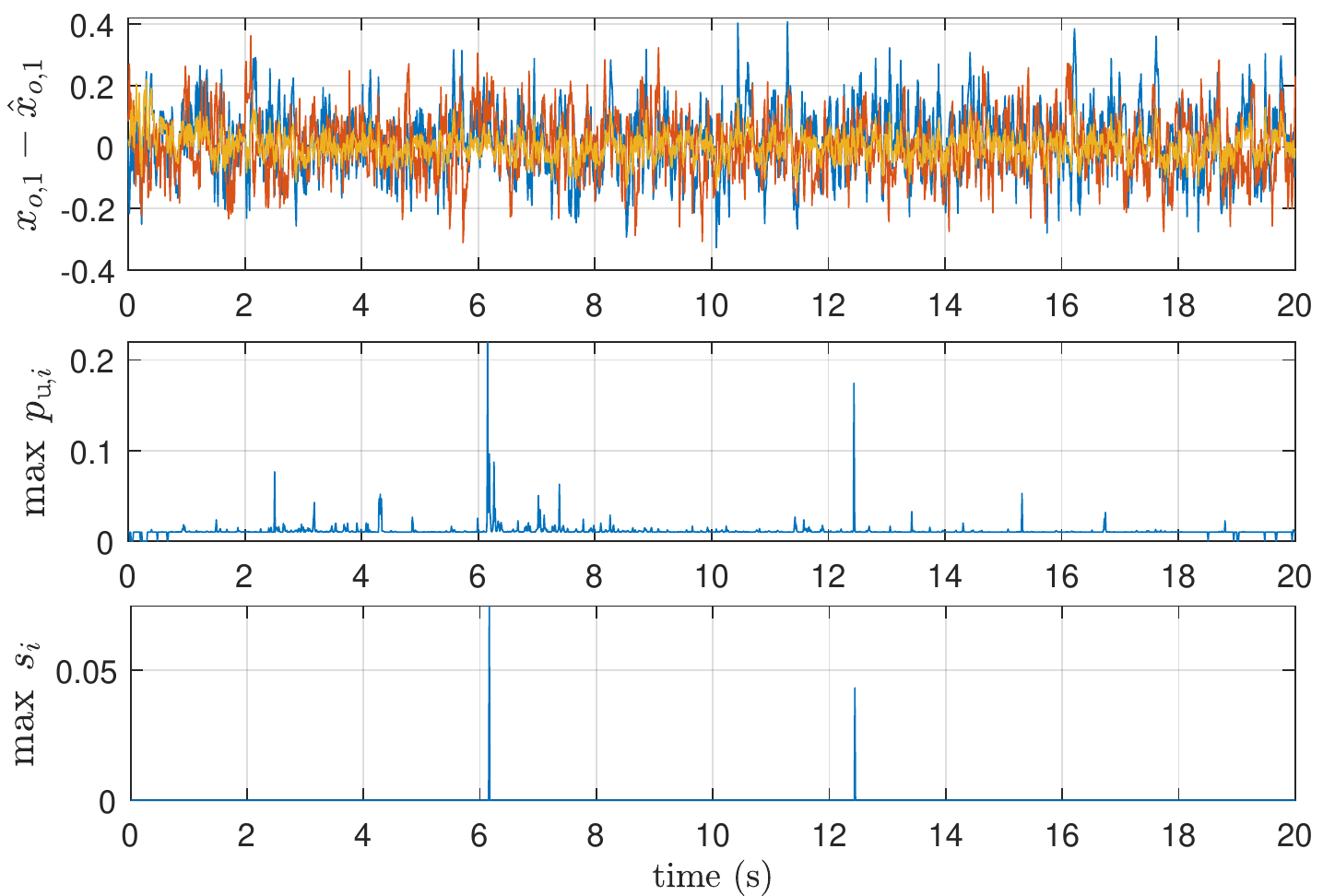}
  \caption{Figures show the error in the state estimation, an upper bound to the risk, and the value of the slack variable in program (\ref{prog2}).}
  \label{fig:plot}
\end{figure}

In order to design the control policies $u$, at each iteration, we estimated the states of the traffic participants $\hat x_o$ based on the measurements $y_o$ using the Kalman filter, and solved program (\ref{prog2}) constrained by a set of inequalities that each corresponds to a close-by traffic participant. Hence, the input $\hat x$ to the program consists of the state of the ego vehicle $\bar x_o$, and the state estimations of the traffic participants close to the ego vehicle (In our case in a distance of less than 2.5). An additional constraint imposed by a control Lyapunov function was also added to program (\ref{prog2}) to lead the ego vehicle toward the goal set. We fixed the parameter $a_i$ to 1, chose $c$ to be a very large value, and added the variables $b_i$ to the objective function to decrease the risk when a risky action can be avoided.

An snapshot of the ego vehicle and the traffic participants' positions in the highway at time $t = 11$ is shown at the bottom subplot of Fig. \ref{fig:inits}. The orange ellipsoidal sets show the contours of Gaussian distributed state estimates $\hat x_o$ with the associated covariance matrix $P$, and the significance level $1-p_e = 0.99$. The subplot also shows the time-stamped trajectory of the ego vehicle. The blue part of the trajectory shows the path of the ego vehicle up to $t= 11$ and the gray part shows the path it will take in the rest of the simulation.
The simulation of the scenario can be found at \href{https://youtu.be/A1EoCq5V3PM}{youtu.be/A1EoCq5V3PM}.
It is worth noting that since in this scenario measurements $y_o$ are received continuously, and the traffic participants have a near linear behavior, the ellipsoidal sets related to the state estimates do not change much in size. 
In the top subplot of Fig. \ref{fig:plot} the estimation error for one of the traffic participants is depicted. As expected the size of these errors match the value of $\epsilon = 0.5$.
The middle subplot shows the overall upper bound to the risk computed by taking the maximum from the upper bounds of the risks $p_\uc$ associated with all the nearby traffic participants. These upper bounds to the risks which we show with $\bar p_{\uc,i}$ for the ith traffic participant is computed as $\bar  p_{\uc,i} = p_e+(1-pe) \bar p_{B,i}$ in which $\bar p_{B,i}$ is an upper bound to $\hat p_\uc$. From \cite{yaghoubi2020risk}, we have $\bar p_{B,i} = 1- (1- B_{0,i}) e^{-\bb _i T}$, where $B_{0,i}$, and $b_i$ are the values of the barrier function and the parameter $b$ corresponding to the current state of the $i$-th traffic participant, respectively. It can be observed that the upper bound to the risk (and hence the risk itself) is bounded to the desired value 0.1 almost everywhere except in 2 time instances. In these 2 time instances it is not possible to find a feasible solution to the problem assuming $s = 0$ (i.e., program (\ref{prog}) is infeasible). As pictured in the bottom subplot, in these time instances the slack variable $s$ has been used to find a feasible solution to program (\ref{prog2}), and in the rest of the time it has a zero value confirming that the solutions to program (\ref{prog}) and (\ref{prog2}) are the same when program (\ref{prog}) is feasible.

\section{Conclusion}
In this paper, we studied Stochastic Control Barrier Functions to design control strategies for stochastic systems that include process noise using partial noisy measurements.
We derive sufficient conditions on the SCBF 
to bound the probability of an imminent undesired behavior (like a collision) in the system to a desired value. 
These conditions preserve probabilistic guarantees under state estimation with extended Kalman Filters assuming that the estimation error is bounded. 
These conditions can be combined with other performance objectives in a quadratic program that can be solved in real-time for designing control strategies. 
The approach is simulated on a lane-changing scenario in a highway with dense traffic. 
In the future, we will apply our method on robotic platforms such as the Human Support Robot (HSR)~\cite{yamamoto2019development} to enable indoor motion planning in the presence of humans. 


\bibliographystyle{plain} 

\bibliography{IEEEabrv,Ref}

\begin{thebibliography}{10}

\bibitem{agrawal2017discrete}
Ayush Agrawal and Koushil Sreenath.
\newblock Discrete control barrier functions for safety-critical control of
  discrete systems with application to bipedal robot navigation.
\newblock In {\em Robotics: Science and Systems}, 2017.

\bibitem{althoff2008stochastic}
Matthias Althoff, Olaf Stursberg, and Martin Buss.
\newblock Stochastic reachable sets of interacting traffic participants.
\newblock In {\em 2008 IEEE Intelligent Vehicles Symposium}, pages 1086--1092.
  IEEE, 2008.

\bibitem{ames2014control}
Aaron~D Ames, Jessy~W Grizzle, and Paulo Tabuada.
\newblock Control barrier function based quadratic programs with application to
  adaptive cruise control.
\newblock In {\em 53rd IEEE Conference on Decision and Control}, 2014.

\bibitem{brown2012introduction}
Robert~Grover Brown and Patrick~YC Hwang.
\newblock {\em Introduction to random signals and applied Kalman filtering:
  with MATLAB exercises}.
\newblock J Wiley \& Sons, 2012.

\bibitem{chen2017obstacle}
Yuxiao Chen, Huei Peng, and Jessy Grizzle.
\newblock Obstacle avoidance for low-speed autonomous vehicles with barrier
  function.
\newblock {\em IEEE Transactions on Control Systems Technology},
  26(1):194--206, 2017.

\bibitem{chen2018data}
Yuxiao Chen, Huei Peng, Jessy Grizzle, and Necmiye Ozay.
\newblock Data-driven computation of minimal robust control invariant set.
\newblock In {\em 2018 IEEE Conference on Decision and Control (CDC)}, pages
  4052--4058. IEEE, 2018.

\bibitem{clark2019control}
Andrew Clark.
\newblock Control barrier functions for complete and incomplete information
  stochastic systems.
\newblock In {\em 2019 American Control Conference (ACC)}, pages 2928--2935.
  IEEE, 2019.

\bibitem{emam2019robust}
Yousef Emam, Paul Glotfelter, and Magnus Egerstedt.
\newblock Robust barrier functions for a fully autonomous, remotely accessible
  swarm-robotics testbed.
\newblock In {\em 2019 IEEE 58th Conference on Decision and Control (CDC)},
  pages 3984--3990. IEEE, 2019.

\bibitem{glotfelter2017nonsmooth}
Paul Glotfelter, Jorge Cort{\'e}s, and Magnus Egerstedt.
\newblock Nonsmooth barrier functions with applications to multi-robot systems.
\newblock {\em IEEE control systems letters}, 1(2):310--315, 2017.

\bibitem{jagtap2018temporal}
Pushpak Jagtap, Sadegh Soudjani, and Majid Zamani.
\newblock Temporal logic verification of stochastic systems using barrier
  certificates.
\newblock In {\em International Symposium on Automated Technology for
  Verification and Analysis}, pages 177--193. Springer, 2018.

\bibitem{kochdumper2020utilizing}
Niklas Kochdumper, Bastian Sch{\"u}rmann, and Matthias Althoff.
\newblock Utilizing dependencies to obtain subsets of reachable sets.
\newblock In {\em Proceedings of the 23rd International Conference on Hybrid
  Systems: Computation and Control}, pages 1--10, 2020.

\bibitem{kolathaya2018input}
Shishir Kolathaya and Aaron~D Ames.
\newblock Input-to-state safety with control barrier functions.
\newblock {\em IEEE control systems letters}, 3(1), 2018.

\bibitem{kushner1967stochastic}
Harold~J Kushner.
\newblock Stochastic stability and control.
\newblock Technical report, Brown Univ Providence RI, 1967.

\bibitem{leung2020infusing}
Karen Leung, Edward Schmerling, Mengxuan Zhang, Mo~Chen, John Talbot,
  J~Christian Gerdes, and Marco Pavone.
\newblock On infusing reachability-based safety assurance within planning
  frameworks for human--robot vehicle interactions.
\newblock {\em The International Journal of Robotics Research},
  39(10-11):1326--1345, 2020.

\bibitem{lindemann2020control}
Lars Lindemann, George~J Pappas, and Dimos~V Dimarogonas.
\newblock Control barrier functions for nonholonomic systems under risk signal
  temporal logic specifications.
\newblock {\em arXiv preprint arXiv:2004.02111}, 2020.

\bibitem{malone2017hybrid}
Nick Malone et~al.
\newblock Hybrid dynamic moving obstacle avoidance using a stochastic reachable
  set-based potential field.
\newblock {\em IEEE Transactions on Robotics}, 33(5):1124--1138, 2017.

\bibitem{oksendal2003stochastic}
Bernt {\O}ksendal.
\newblock Stochastic differential equations.
\newblock In {\em Stochastic differential equations}, pages 65--84. Springer,
  2003.

\bibitem{Jadbabaie2007}
S.~{Prajna}, A.~{Jadbabaie}, and G.~J. {Pappas}.
\newblock A framework for worst-case and stochastic safety verification using
  barrier certificates.
\newblock {\em IEEE Transactions on Automatic Control}, 52(8):1415--1428, 2007.

\bibitem{reif2000stochastic}
Konrad Reif, Stefan Gunther, Engin Yaz, and Rolf Unbehauen.
\newblock Stochastic stability of the continuous-time extended kalman filter.
\newblock {\em IEE Proceedings-Control Theory and Applications}, 147(1):45--52,
  2000.

\bibitem{sadraddini2018formal}
Sadra Sadraddini and Calin Belta.
\newblock Formal guarantees in data-driven model identification and control
  synthesis.
\newblock In {\em Proceedings of the 21st International Conference on Hybrid
  Systems: Computation and Control (part of CPS Week)}, pages 147--156, 2018.

\bibitem{santoyo2019barrier}
Cesar Santoyo, Maxence Dutreix, and Samuel Coogan.
\newblock A barrier function approach to finite-time stochastic system
  verification and control.
\newblock {\em arXiv preprint arXiv:1909.05109}, 2019.

\bibitem{xiao2019decentralized}
Wei Xiao, Calin Belta, and Christos~G Cassandras.
\newblock Decentralized merging control in traffic networks: A control barrier
  function approach.
\newblock In {\em Proceedings of the 10th ACM/IEEE International Conference on
  Cyber-Physical Systems}, pages 270--279, 2019.

\bibitem{yaghoubi2019worst}
Shakiba Yaghoubi and Georgios Fainekos.
\newblock Worst-case satisfaction of stl specifications using feedforward
  neural network controllers: a lagrange multipliers approach.
\newblock {\em ACM Transactions on Embedded Computing Systems (TECS)},
  18(5s):107, 2019.

\bibitem{yaghoubi2020training}
Shakiba Yaghoubi, Georgios Fainekos, and Sriram Sankaranarayanan.
\newblock Training neural network controllers using control barrier functions
  in the presence of disturbances.
\newblock {\em 2020 IEEE Intelligent Transportation Systems Conference (ITSC)},
  2020.

\bibitem{yaghoubi2020risk}
Shakiba Yaghoubi, Keyvan Majd, Georgios Fainekos, Tomoya Yamaguchi, Danil
  Prokhorov, and Bardh Hoxha.
\newblock Risk-bounded control using stochastic barrier functions.
\newblock {\em IEEE Control Systems Letters}, 2020.

\bibitem{yamamoto2019development}
Takashi Yamamoto, Koji Terada, Akiyoshi Ochiai, Fuminori Saito, Yoshiaki
  Asahara, and Kazuto Murase.
\newblock Development of human support robot as the research platform of a
  domestic mobile manipulator.
\newblock {\em ROBOMECH journal}, 6(1):1--15, 2019.

\end{thebibliography}

\end{document}